\definecolor{Graydark}{gray}{0.75}
\definecolor{Graylight}{gray}{0.90}
\newtheorem{theorem}{Theorem}[section] 
\newtheorem{lemma}[theorem]{Lemma}
\newtheorem{remark}[theorem]{Remark}
\newtheorem{definition}[theorem]{Definition} 
\newtheorem{example}[theorem]{Example}
\date{University of Helsinki}
\author{Matilda Häggblom}
\begin{document}
\title{Axiomatization of approximate exclusion}
\maketitle             
\noindent\textbf{Abstract.}
We define and axiomatize approximate exclusion atoms in the team semantic setting. A team is a set of assignments, which can be seen as a mathematical model of a uni-relational database, and we say that an approximate exclusion atom is satisfied in a team if the corresponding usual exclusion atom is satisfied in a large enough subteam. We consider the implication problem for approximate exclusion atoms and show that it is axiomatizable for consequences with a degree of approximation that is not too large. We prove the completeness theorem for usual exclusion atoms, which is currently missing from the literature, and generalize it to approximate exclusion atoms. We also provide a polynomial time algorithm for the implication problems. The results also apply to exclusion dependencies in database theory. 
\bigbreak
\noindent\textbf{Keywords.} Relational database, Exclusion dependency, Approximate exclusion, Team semantics.

\section{Introduction}

Team semantics was introduced by Hodges in \cite{hodges1997,hodges19972} and further developed by Väänänen in \cite{vaananen2007} with the introduction of dependence logic. Team semantics is suited for examining expressions about relationships between variables since formulas are evaluated in a finite set of assignments, called a \textit{team}, instead of a single assignment. Various dependencies from database theory have thus been considered in the team semantic setting. Functional dependencies are captured by dependence atoms \cite{vaananen2007}, and exclusion dependencies introduced by Casanova and Vidal in \cite{Casanova1983TowardsAS} were adapted as exclusion atoms by Galliani in \cite{galliani2012}, together with inclusion atoms.

Kivinen and Mannila defined approximate functional dependencies in \cite{KIVINEN1995129}, and Väänänen later defined approximate dependence atoms and axiomatized them in \cite{Väänänen2017}. We define, analogously, that an \textit{approximate exclusion atom} is satisfied in a team if there exists a large enough subteam that satisfies the corresponding usual exclusion atom. The definition is motivated by dependence and exclusion atoms both being downward closed, i.e., when a team satisfies an atom, so do its subteams. The approximate atoms are suitable when it is permitted that the team has some, typically small, bounded degree of error.  

We consider the following implication problem: Does a (possibly infinite) set of approximate exclusion atoms imply a given approximate exclusion atom? We show that the implication problem is axiomatizable for consequences whose approximation allows less than half of the team to be faulty, and for assumption sets for which there exists a gap between the approximations larger than the one in the consequence and the approximation in the consequence. The former restriction still allows all cases where the natural interpretation of the consequence is ''almost exclusion'', and the latter only affects infinite assumption sets.

We define a complete set of rules for exclusion atoms, whose explicit axiomatization is currently missing from the literature. Some of the rules are from the system for exclusion and inclusion combined introduced in \cite{Casanova1983TowardsAS}, with a necessary additional rule. We then generalize the system 
to approximate exclusion atoms, and prove completeness using a counterexample team that also generalizes the one for usual exclusion atoms. We also provide a polynomial time algorithm showing that the finite implication problems for (approximate) exclusion atoms are decidable. 

The results in this paper can immediately be transferred to the database setting by reading ''uni-relational database'' instead of ''team'' and ''exclusion dependency'' instead of ''exclusion atom''.

\section{Exclusion atoms}
We recall basic definitions of team semantics and exclusion atoms as defined in \cite{galliani2012}.

A team $T$ is a finite set of assignments $s:\mathcal{V}\longrightarrow M$, where $\mathcal{V}$ is a set of variables and $M$ is a set of values. We write $x_i,y_i,\dots$ for individual variables and $x,y,\dots$ for finite tuples of variables. For a variable tuple $x=\langle x_1,\dots ,x_n\rangle$, we write $s(x)$ as shorthand for $\langle s(x_1),\dots,s(x_n)\rangle$. Exclusion atoms are written as $x|y$, where $x$ and $y$ have the same length, denoted $|x|=|y|$. We recall the semantics of the usual exclusion atom:
\begin{equation*}
     T\models x|y \text{ if and only if for all } s_1,s_2\in T,\  s_1(x)\neq s_2(y). 
\end{equation*}
    
It follows from the definition that exclusion atoms are downward closed: $T\models x|y$ implies that $T^\prime\models x|y$ for all subteams $T^\prime\subseteq T$. Exclusion atoms also have the empty team property since all exclusion atoms are satisfied by the empty team. We call exclusion atoms of the form $x|x$ \textit{contradictory} since they are only satisfied in the empty team.

We write $x|y\models u|v$ if for all teams $T$, $T\models x|y$ implies $T\models u|v$. If $x|y\models u|v$ and $u|v\models x|y$, we say that the exclusion atoms $u|v$ and $x|y$ are semantically equivalent and write $x|y\equiv u|v$.

We define the rules for exclusion atoms and show that the system they form is sound and complete.

\begin{definition}\label{def_usual_exl_rules}
The rules for exclusion atoms are: 
\begin{enumerate}[align=left]
\item[(E1)] $x|x\vdash y|z$

\item[(E2)] $x|y\vdash y|x$ 

\item[(E3)] $x|y\vdash xu|yv$

\item[(E4)] $xuu|yvv\vdash xu|yv$

\item[(E5)] $xyz|uvw\vdash xzy|uwv$\hfill ($|x|=|u|$ and $|y|=|v|$) 

\item[(E6)] $xw|yw\vdash zz|xy$. 
\end{enumerate}
\end{definition}

\begin{lemma}[Soundness]
Let $\Sigma$ be a set of exclusion atoms. If $\Sigma\vdash x|y$, then $\Sigma\models x|y$.
\end{lemma}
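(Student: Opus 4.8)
The plan is to induct on the length of a derivation witnessing $\Sigma \vdash x|y$. Such a derivation is a finite sequence of exclusion atoms in which every entry either belongs to $\Sigma$ or is obtained from a single earlier entry by one of (E1)--(E6) (all six rules are unary). Consequently it is enough to establish \emph{local soundness}: whenever $\sigma \vdash \tau$ is an instance of one of the rules and a team $T$ satisfies $\sigma$, then $T$ also satisfies $\tau$. Granting this, fix an arbitrary $T$ with $T \models \rho$ for every $\rho \in \Sigma$ and argue by induction on the position $i$ of an entry $\sigma_i$ in the derivation: the base case (an entry taken from $\Sigma$) holds by the choice of $T$, and the inductive step is a single invocation of local soundness. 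So the real work is to verify the six rules individually, which I would carry out as follows.

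For (E1): $T \models x|x$ is only possible when $T = \emptyset$, since otherwise picking any $s \in T$ and taking $s_1 = s_2 = s$ in the defining condition yields $s(x) \neq s(x)$; and $T = \emptyset$ satisfies every exclusion atom, in particular $y|z$, by the empty team property. Rules (E2)--(E5) all reduce to simple bookkeeping with the quantified condition ``$s_1(\cdot) \neq s_2(\cdot)$ for all $s_1,s_2 \in T$''. For (E2) one applies the premise to the swapped pair $(s_2,s_1)$. For (E5) one observes that $s_1(xyz) = s_2(uvw)$ holds exactly when the three coordinate blocks agree pairwise, a condition invariant under reordering the blocks, so $s_1(xyz)=s_2(uvw)$ iff $s_1(xzy)=s_2(uwv)$, and the conclusion is the contrapositive of the premise. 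For (E3), inequality of $s_1(x)$ and $s_2(y)$ already forces inequality of the longer tuples $s_1(xu)$ and $s_2(yv)$ in their initial block. For (E4), an equality $s_1(xu) = s_2(yv)$ entails $s_1(x) = s_2(y)$ and $s_1(u) = s_2(v)$, hence $s_1(xuu) = s_2(yvv)$, so again we argue by contraposition. (Throughout one should check the implicit length requirements — e.g.\ $|u| = |v|$ in (E3) and (E4), and the ones stated for (E5) — which are precisely what make these tuple manipulations type-correct.)

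The rule that needs a genuinely different idea, and which I expect to be the main point of the argument, is (E6): $xw|yw \vdash zz|xy$, where the length conventions force $|x| = |y| = |z|$. Suppose $T \models xw|yw$ but, for contradiction, $T \not\models zz|xy$; then there are $s_1,s_2 \in T$ with $s_1(zz) = s_2(xy)$, i.e.\ $s_1(z) = s_2(x)$ and $s_1(z) = s_2(y)$, whence $s_2(x) = s_2(y)$. Applying the premise to the diagonal pair $(s_2,s_2)$ gives $s_2(xw) \neq s_2(yw)$; but $s_2(x) = s_2(y)$ together with $s_2(w) = s_2(w)$ forces $s_2(xw) = s_2(yw)$, a contradiction. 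Hence $T \models zz|xy$.

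Beyond spotting the diagonal instance $(s_2,s_2)$ for (E6) and remembering to invoke the empty team property for (E1), I do not anticipate any real obstacle: the induction is entirely standard once local soundness of each rule is in hand.
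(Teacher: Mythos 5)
Your proposal is correct and follows essentially the same route as the paper: the paper dismisses (E1)--(E5) as straightforward and, for (E6), uses exactly the observation you isolate, namely that $T\models xw|yw$ applied to a single assignment forces $s(x)\neq s(y)$, while $s(zz)$ is always of the doubled form $bb$. Your contradiction-style phrasing via the diagonal pair $(s_2,s_2)$ is just the contrapositive of the paper's direct argument, so there is nothing to flag.
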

\begin{proof}
    The proofs are straightforward. 
    For $E6$, if $T\models xw|yw$, then for every $s\in T$ there is no tuple $a$ such that $s(xy)=aa$. On the other hand, for all $s\in T$, $s(zz)=bb$ for some tuple $b$. Thus we conclude $T\models zz|xy$.
\end{proof}

Let us comment on the rules in relation to the system for exclusion and inclusion combined in \cite{Casanova1983TowardsAS}. The rules $E1$, $E2$, $E3$ and $E5$ are all included in their system and, assuming they do not distinguish between dependencies like $xuu|yvv$ and $xu|yv$, the rule $E4$ does not apply. Rule $E6$ is new and we show that it is not derivable from the rules in \cite{Casanova1983TowardsAS}. Recall the definition of inclusion atoms $Inc(x,y)$, with $|x|=|y|$, 
\begin{align*}
    T\models Inc(x,y)\text{ if and only if } &\text{for all } s\in T,\text{ there exists } s^\prime\in T\text{ such } \text{that } s(x)=s^\prime(y).
\end{align*}

\begin{remark}
   The system for inclusion and exclusion in \cite{Casanova1983TowardsAS} includes the rules $E1$, $E2$, $E3$ and $E5$ together with the rules below\footnote{In \cite{Casanova1983TowardsAS}, the rules $IE2$ and $IE3$ correspond to one rule.}, and is not complete for exclusion consequences:   Consider $x_1\neq y_1$, then $x_1w_1w_2|y_1w_1w_2$ is not contradictory. We have that $x_1w_1w_2|y_1w_1w_2\models z_1z_1|x_1y_1$ with $|x_1y_1|\neq |x_1w_1w_2|$, but no rule in this system allows such a change in arity. 
\begin{enumerate}[align=left]
\item[(IE1)] $\vdash Inc(x,x)$
\item[(IE2)] $Inc(xyz,uvw)\vdash Inc(xzy,uwv)$ \hfill ($|x|=|u|$ and $|y|=|v|$)
\item[(IE3)] $Inc(xu,yv)\vdash Inc(x,y)$
\item[(IE4)] $Inc(x,z)$, $Inc(z,y)\vdash Inc(x,y)$





\item[(IE5)] $x|x\vdash Inc(y,z)$

\item[(IE6)] $Inc(x,u)$, $Inc(y,v)$, $u|v\vdash x|y$. 
\end{enumerate}
\end{remark}

Next we define a function to identify exclusion atoms with sets of ordered pairs, such that two exclusion atoms are identified with the same set only if they are semantically equivalent. We also define sets such that for a given variable in an exclusion atom, the set contains exactly those variables that appear on the other side of the exclusion symbol in the same position as the given variable. We write $(x)_i$ to denote the projection to the $i$:th variable of $x$. Let $Var$ be a function such that $Var(x)=\{(x)_i :  1\leq i\leq|x|\}$.

\begin{definition}
    Let $|x|=|y|=n$ and $x|y$ be any exclusion atom. Define the exclusion atom's set representation $S(x|y)$ by: 
    $$S(x|y)=\{\langle(x)_1,(y)_1\rangle,\langle(x)_2,(y)_2\rangle, \dots ,\langle(x)_n,(y)_n\rangle\}.$$
For each $(x)_i$ and $(y)_i$ we define the correspondence sets  $$ \mathcal{C}_{(x)_i}=\{y_j :  \langle(x)_i,y_j\rangle\in S(x| y)\}$$ $$\mathcal{C}_{(y)_i}=\{x_j :  \langle x_j,(y)_i\rangle\in S(x| y)\}.$$
\end{definition}

We give examples of the set representation and correspondence sets for the exclusion atom $x|y=x_2y_3x_2x_4|y_1y_3y_3y_4$: \begin{align*}
S(x|y)&=\{\langle x_2,y_1\rangle ,\langle y_3,y_3\rangle ,\langle x_2,y_3\rangle , \langle x_4,y_4\rangle \}, 
\end{align*} $ \mathcal{C}_{(x)_1}=\{y_1,y_3\}$, $ \mathcal{C}_{(x)_2}=\{y_3\}$, and $ \mathcal{C}_{(y)_2}=\{y_3,x_2\}$.

Let us also define the \textit{end-constant form} for exclusion atoms $u|v$ as the semantically equivalent exclusion atom $u^\prime c|v^\prime c$ where $S(u|v)=S(u^\prime c|v^\prime c)$ and for all $\langle (u)_i,(v)_i \rangle\in S(u|v)$, if $(u)_i=(v)_i$, then $\langle (u)_i,(v)_i \rangle\in S(c|c)$ and $\langle (u)_i,(v)_i \rangle\not\in S(u^\prime|v^\prime)$. 

The next lemma shows how the rules correspond to exclusion atoms' set representations. By item \ref{equiv_usual_i} in Lemma \ref{equiv_usual}, together with soundness, it follows that if two exclusion atoms are identified with the same set, then they are semantically equivalent.   Item \ref{equiv_usual_ii} gives a set representation condition in a derivation where the rule $E6$ is used. 

\begin{lemma}\label{equiv_usual} Suppose that $|x|=|y|=n$ and $u\neq v$. Let $uc|vc$ be in end-constant form. 
    \begin{enumerate}[label=(\roman*)]
        \item $u| v\vdash_{\{E3, E4, E5\}} x| y$ if and only if $S(u| v)\subseteq S(x| y)$. \label{equiv_usual_i}
        
        \item Suppose that $S(uc| vc)\not\subseteq S(x| y)$ and $S(vc| uc)\not\subseteq S(x| y)$, then $uc| vc\vdash x| y$ if and only if there is 
 $d\in\{x,y\}$ such that \begin{align*}
     S(uc|vc)\subseteq& \bigcup_{1\leq i\leq n} \mathcal{C}_{(d)_i}\times  \mathcal{C}_{(d)_i}\cup \{\langle w_l,w_l\rangle :  w_l\in \mathcal{V}\}.
 \end{align*}\label{equiv_usual_ii}
    \end{enumerate}
\end{lemma}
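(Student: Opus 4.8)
The plan is to establish (i) and (ii) separately, proving both directions of each biconditional. Part (i) is a bookkeeping statement about how the structural rules $E3,E4,E5$ act on set representations, and part (ii) is where the new rule $E6$ enters; the latter is where the real difficulty lies.

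For the forward direction of (i) I would induct on the length of a $\{E3,E4,E5\}$-derivation and check that $S$ never shrinks along it: $E5$ merely permutes positions and $E4$ deletes a position whose pair is already realized elsewhere, so both preserve $S(\cdot)$, while $E3$ appends new positions and so can only enlarge it; composing, $S(u|v)\subseteq S(x|y)$. For the converse, given $S(u|v)\subseteq S(x|y)$ I would build a derivation in three stages. First, use $E3$ to append one copy of every pair in $S(x|y)\setminus S(u|v)$. Second, observe that a single position can be duplicated (move it to the end with $E5$, copy it with $E3$) and a duplicated position can be deleted (bring the two copies to the end with $E5$, then apply $E4$), so the multiplicity of every pair can be matched to the one in $x|y$. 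Third, reorder the positions with $E5$ to obtain the tuple pair $x|y$ exactly. Soundness of the rules is already available, so only this constructive half needs attention, and it is routine.

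For the $(\Leftarrow)$ direction of (ii), assume the displayed inclusion holds for some $d\in\{x,y\}$. Since $uc|vc$ is in end-constant form, $S(u|v)$ contains no diagonal pairs, so the inclusion forces every pair $\langle u_i,v_i\rangle$ of $S(u|v)$ into some $\mathcal C_{(d)_j}\times\mathcal C_{(d)_j}$, while the diagonal pairs coming from the block $c$ are absorbed by the term $\{\langle w_l,w_l\rangle:w_l\in\mathcal V\}$. If $d=x$: apply $E6$ to $uc|vc$ with $c$ as the repeated block, obtaining $zz|uv$, and choose the tuple $z$ so that $z_i:=(x)_j$ where $u_i,v_i\in\mathcal C_{(x)_j}$; then every pair of $S(zz|uv)$ lies in $S(x|y)$, and part (i) completes the derivation to $x|y$. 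The case $d=y$ is symmetric: applying the $d=x$ argument to the target $y|x$ (the sets $\mathcal C_{(y)_j}$ are the same relative to $x|y$ and to $y|x$, and the two hypotheses of (ii) merely swap) gives $uc|vc\vdash y|x$, and one application of $E2$ then yields $x|y$.

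For the $(\Rightarrow)$ direction of (ii), I would first dispose of the easy options. By part (i) together with $E2$, the hypotheses $S(uc|vc)\not\subseteq S(x|y)$ and $S(vc|uc)\not\subseteq S(x|y)$ say exactly that $x|y$ is not derivable from $uc|vc$ using only $E2,E3,E4,E5$; and since $u\neq v$ the atom $uc|vc$ is not contradictory, while by soundness no contradictory atom can be derived from a non-contradictory one, so $E1$ never occurs. Hence any derivation of $x|y$ from $uc|vc$ uses $E6$. I would then isolate an application of $E6$ together with the structural steps around it: immediately before it we have (by part (i) and $E2$, up to a flip) an atom $pw|qw$ with $S(uc|vc)\subseteq S(pw|qw)$, and after it and further structural steps we have $S(zz|pq)\subseteq S(x|y)$, again up to a flip. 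In the unflipped case each $z_i$ is a first-component variable of $x|y$ with $p_i,q_i\in\mathcal C_{z_i}$, so every pair $\langle p_i,q_i\rangle$ — in particular every off-diagonal pair of $S(uc|vc)$ — lies in $\bigcup_j\mathcal C_{(x)_j}\times\mathcal C_{(x)_j}$, which is the $d=x$ case; the flipped cases give $d=y$, using that $\mathcal C_{(d)_j}\times\mathcal C_{(d)_j}$ is symmetric under swapping coordinates. The main obstacle is making this last step fully rigorous: one has to argue that a single essential application of $E6$ already suffices once the structural rules have been exhausted, and to trace the tuples $p,q$ produced there back to $u,v$ through the intervening $E3$ steps, which may introduce auxiliary variables. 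This careful bookkeeping around the $E6$ step, rather than any single new idea, is the heart of the argument.
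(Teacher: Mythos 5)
Your proposal is correct and follows essentially the same route as the paper: part (i) as a direct correspondence between the structural rules $E3,E4,E5$ (plus $E2$ for flips) and inclusion of set representations, and part (ii) by normalizing any derivation to a single application of $E6$ in canonical position and translating $S(zz|uv)\subseteq S(x|y)$ into the correspondence-set condition, with the choice $(z)_j:=(x)_i$ for the converse. The paper is in fact terser than you are on the normalization step you flag as the main obstacle (it simply asserts that one application of $E6$ suffices and that the derivation may be taken in the form $uc|vc\vdash_{\{E6\}}zz|uv\vdash_{\{E3,E4,E5\}}x|y$), so your extra care there is compatible with, not divergent from, the published argument.
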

\begin{proof}
\begin{enumerate}[label=(\roman*)]
    \item Immediate by the definition of the rules $E3$, $E4$ and $E5$.

    \item  Let $|u|=|v|=m$. By the assumptions and item \ref{equiv_usual_i}, we need to use rule $E6$ in the derivation. We note that one application of rule $E6$ is sufficient, since after that all variables in $uv$ have already been moved to the same side. We check the case when $d=x$ and the derivation is of the form $uc|vc\vdash_{\{E6\}} zz| uv\vdash_{\{E3, E4, E5\}} x| y$, the other cases are symmetrical. 
    \begin{align*}
     uc|vc&\vdash_{\{E6\}}zz| uv\vdash_{\{E3, E4, E5\}} x| y \\
   \Longleftrightarrow& S(zz| uv)\subseteq S(x| y) \hfill \text{ by  item \ref{equiv_usual_i}}\\ 
  \Longleftrightarrow&  \text{For all } 1\leq j\leq 2m, \langle(zz)_j,(uv)_j\rangle\in S(x|y) \\ 
    \Longleftrightarrow& \text{For all } 1\leq j\leq m, \langle(z)_j,(u)_j\rangle\in S(x|y) \text{ and }\langle(z)_j,(v)_j\rangle\in S(x|y)\\&\text{with } (z)_j=(x)_i  \text{ for some } 1\leq i\leq n \\ 
\Longleftrightarrow& \text{For all }  1\leq j\leq m, \langle(u)_j,(v)_j\rangle\in  \mathcal{C}_{(x)_i}\times  \mathcal{C}_{(x)_i} \text{ for some }  1\leq i\leq n \\
\Longleftrightarrow& S(u| v)\subseteq \bigcup_{1\leq i\leq n} \mathcal{C}_{(x)_i}\times  \mathcal{C}_{(x)_i}    \\
\Longleftrightarrow& S(uc| vc)\subseteq \bigcup_{1\leq i\leq n} \mathcal{C}_{(x)_i}\times  \mathcal{C}_{(x)_i}
\cup\{\langle w_l,w_l\rangle :  w_l\in \mathcal{V}\}.        
    \end{align*}    
\end{enumerate}
\end{proof}

We are now ready to prove the completeness theorem for exclusion atoms, by constructing a counterexample team.

\begin{theorem}[Completeness]\label{usual_compl}
Let $\Sigma\cup\{x|y\}$ be a set of exclusion atoms with $|x|=|y|=n$. If $\Sigma\models x| y$ then $\Sigma\vdash x| y$.
\end{theorem}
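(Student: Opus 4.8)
The plan is to prove the contrapositive: assuming $\Sigma \nvdash x|y$, I construct a team $T$ with $T \models \Sigma$ but $T \nvDash x|y$. Since every rule of Definition \ref{def_usual_exl_rules} has a single premise, $\Sigma \vdash x|y$ holds if and only if $u|v \vdash x|y$ for some $u|v \in \Sigma$, so the hypothesis gives $u|v \nvdash x|y$ for every $u|v \in \Sigma$. Two preliminary reductions come first. No atom of $\Sigma$ is contradictory, for such an atom would derive $x|y$ by (E1); consequently an injective singleton team already satisfies $\Sigma$, which settles the subcase $x=y$ (where $x|y$ is contradictory). Moreover $S(u|v) \not\subseteq S(x|y)$ and $S(v|u) \not\subseteq S(x|y)$ for every $u|v \in \Sigma$, for otherwise $u|v \vdash x|y$ by Lemma \ref{equiv_usual}\ref{equiv_usual_i} and (E2). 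So from now on $x \neq y$ as tuples, and these two inclusions fail throughout $\Sigma$.

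For the construction, let $\sim$ be the least equivalence relation on $\mathcal{V} \times \{1,2\}$ containing $\langle\langle(x)_i,1\rangle,\langle(y)_i,2\rangle\rangle$ for all $i \le n$, and set $T = \{s_1,s_2\}$ where $s_a(v)$ is the $\sim$-class of $\langle v,a\rangle$, distinct classes receiving distinct values. Then $s_1((x)_i) = s_2((y)_i)$ for each $i$, so $s_1(x) = s_2(y)$ and $T \nvDash x|y$. To see that $T \models \Sigma$, suppose some $u|v \in \Sigma$ of arity $m$ fails in $T$, i.e. $s_a(u) = s_b(v)$ for some $a,b \in \{1,2\}$, equivalently $\langle(u)_k,a\rangle \sim \langle(v)_k,b\rangle$ for all $k \le m$. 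Viewing the generators of $\sim$ as the edges of a bipartite graph between $\mathcal{V}\times\{1\}$ and $\mathcal{V}\times\{2\}$, the classes are its connected components and each $\sim$-relation is realized by an alternating walk whose edges are exactly pairs of $S(x|y)$. Passing to the end-constant form $u'c|v'c$ of $u|v$ (with $S(u'c|v'c) = S(u|v)$ and $u'|v'$ diagonal-free) and reading the hypothesis $\langle(u)_k,a\rangle \sim \langle(v)_k,b\rangle$ through these walks, one extracts a single side $d \in \{x,y\}$ for which $S(u'c|v'c) \subseteq \bigcup_{i \le n} \mathcal{C}_{(d)_i}\times\mathcal{C}_{(d)_i} \cup \{\langle w_l,w_l\rangle : w_l \in \mathcal{V}\}$, the alternative $S(u|v)\subseteq S(x|y)$ or $S(v|u)\subseteq S(x|y)$ having been excluded by the second reduction. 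By Lemma \ref{equiv_usual}\ref{equiv_usual_ii} this gives $u'c|v'c \vdash x|y$; since $u|v$ and $u'c|v'c$ share their set representation, Lemma \ref{equiv_usual}\ref{equiv_usual_i} gives $u|v \vdash u'c|v'c$, whence $u|v \vdash x|y$ and $\Sigma \vdash x|y$ — contradicting our assumption. Hence $T \models \Sigma$, as required.

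The crux of the argument, and where I expect the real work to lie, is the extraction in the previous paragraph: turning the bare fact that the two rows of $T$ identify all coordinates of $u|v$ into the precise set-representation inclusion feeding Lemma \ref{equiv_usual}\ref{equiv_usual_ii}, with one uniform choice of side $d$ valid for every coordinate. This demands a careful analysis of the alternating walks realizing the equivalences $\langle(u)_k,a\rangle \sim \langle(v)_k,b\rangle$ — treating the same-level case $a=b$ separately from the cross-level case $a\neq b$, tracking how the walks pass through the correspondence sets $\mathcal{C}_{(x)_i}$ and $\mathcal{C}_{(y)_i}$, and accounting for the diagonal pairs of $u|v$ that were moved into the constant tail — together with a check that the degenerate situations (for instance $s_1 = s_2$, or variables outside $Var(x)\cup Var(y)$, which lie in fresh singleton classes and cause no collapse) do no harm.
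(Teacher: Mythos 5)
Your strategy matches the paper's: contrapositive, a two-assignment team with $s_1(x)=s_2(y)$ and values identified exactly as forced, and a reduction of ``$u|v$ fails in $T$'' to the derivability criteria of Lemma \ref{equiv_usual}. The genuine gap is precisely the step you flag as the crux: the extraction of one side $d$ with $S(u|v)\subseteq\bigcup_{i}\mathcal{C}_{(d)_i}\times\mathcal{C}_{(d)_i}\cup\{\langle w_l,w_l\rangle : w_l\in\mathcal{V}\}$ is never carried out, and as stated it fails whenever the alternating walk realizing $\langle(u)_k,a\rangle\sim\langle(v)_k,b\rangle$ has length greater than two. Concretely, take $x|y=x_1x_2x_2x_4\,|\,y_1y_1y_3y_3$ and $u|v=x_1|x_4\in\Sigma$. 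Your closure $\sim$ gives $\langle x_1,1\rangle\sim\langle y_1,2\rangle\sim\langle x_2,1\rangle\sim\langle y_3,2\rangle\sim\langle x_4,1\rangle$, so $s_1(x_1)=s_1(x_4)$ and $u|v$ fails in $T$; yet $\langle x_1,x_4\rangle$ lies in no single $\mathcal{C}_{(d)_i}\times\mathcal{C}_{(d)_i}$ (here $\mathcal{C}_{(y)_1}=\{x_1,x_2\}$ and $\mathcal{C}_{(y)_3}=\{x_2,x_4\}$), so Lemma \ref{equiv_usual}\ref{equiv_usual_ii} cannot be invoked and your intended contradiction does not materialize. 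In fact $x_1|x_4\vdash x|y$ does hold, but only via two applications of $E6$ (first with $z=x_2$, then with $z=y_1y_3$, followed by $E2$ and $E5$), which falls outside the single-$E6$ characterization of Lemma \ref{equiv_usual}\ref{equiv_usual_ii} that your argument relies on. So the extraction either must be confined to walks of length two, or must be supplemented by an analysis of iterated $E6$; your proposal does neither.

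For comparison: the paper's proof defines the team by ``$a_i=a_j$ if and only if $(x)_i=(x)_j$ or $(y)_i=(y)_j$,'' a condition that is not transitively closed and is literally unsatisfiable for the $x|y$ above, and its case analysis (e.g.\ item \ref{item b}) then concludes ``$(x)_i$ and $(x)_k$ share no values'' from $(x)_i\neq(x)_k$ and $(y)_i\neq(y)_k$, which is exactly what fails under the closure. Your equivalence-relation construction is the honest repair of that definition, but it is precisely this repair that surfaces the long-walk difficulty. In short, you have correctly identified where the real work lies and have reproduced the paper's architecture, but the decisive step is missing, and the plan you sketch for it would not succeed without also strengthening (or bypassing) Lemma \ref{equiv_usual}\ref{equiv_usual_ii}.
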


\begin{proof}
We assume $\Sigma\not\vdash x|y$ and show  $\Sigma\not\models x|y$. Let $X=Var(x)$, $Y=Var(y)$ and let $Z=\{z_1,z_2,\dots\}$ be the variables in $\Sigma$ not in $X\cup Y$. Construct a team $T=\{s_1,s_2\}$ with values from $\mathbb{N}$ as follows.\footnote{The completeness proof is written for when the variables in $\Sigma\cup\{x|y\}$ form a countable set. If not, we construct the counterexample team with the same method but with values from some uncountable set.} Define $s_1(x)=\langle a_1, \dots, a_n \rangle=s_2(y)$, such that for all $i,j\in\{1,2\dots,n\}$, $a_i=a_j$ if and only if $(x)_i=(x)_j$ or $(y)_i= (y)_j$. All other assigned values are unique values from $\mathbb{N}$.

\begin{table}[t]\centering
$\begin{array}{|c| ccc |ccc |ccc}
\hline  	
&(x)_1 & \dots & (x)_n &(y)_1 & \dots & (y)_n & z_1 & z_2 & \dots \\ 
\hline  
 s_1& \cellcolor{Graylight}a_1 &\cellcolor{Graylight}  -&\cellcolor{Graylight} a_n 
  &  
  & \dots
  & 
  & 
  &
  & \dots
  \\

 s_2& 
  & \dots
  & 
  &\cellcolor{Graylight}a_1 
  &\cellcolor{Graylight} -
  &\cellcolor{Graylight} a_n  
  & 
  &
  & \dots
  \\
\hline  
\end{array}$ \caption{Counterexample team for exclusion atoms.{\protect\footnotemark}}
\end{table}

\footnotetext{If $(x)_i=(y)_j$ for some indices $i$ and $j$, then they correspond to one column in the team with $s_1((x)_i)=a_i$ and $s_2((x)_i)=a_j$.}

Clearly, $T\not\models x|y$. Let $u|v\in\Sigma$. First note that $u|v$ with $u=v$ or $S(u|v)$ satisfying any of the cases in Lemma \ref{equiv_usual} would allow us to prove a contradiction. 

For the remaining exclusion atoms $u|v\in\Sigma$, we show $T\models u|v$ by considering the different forms of $u|v$ one by one while excluding the previous cases. First, consider when $Var(uv)\cap Z\neq\emptyset$:
If there is $\langle (u)_i ,z_j \rangle\in S(u|v)$, and $(u)_i\neq z_j$, then $(u)_i$ and $z_j$ have no common values in the team. Otherwise, all variables from $Z$ appear in the form $\langle z_j ,z_j \rangle\in S(u|v)$. Then one of the following cases must hold (or their symmetrical variant):
\begin{enumerate}[resume,label=(\alph*)]
    \item There is $\langle (x)_i, (y)_k \rangle \in S(u|v)$ with $(x)_i\neq (y)_k$. Then there is a possible shared value for $(x)_i$ and $(y)_k$ only if $i=k,(x)_i=(x)_k$ or $(y)_i=(y)_k$, at $s_1((x)_i)=s_2((y)_k)$. But $s_1(z_j)\neq s_2(z_j)$.

    \item There is $\langle (x)_i, (x)_k \rangle \in S(u|v)$ with $(x)_i\neq (x)_k$ (or similarly $\langle (y)_i, (y)_k \rangle \in S(u|v)$), then also $(y)_i\neq(y)_k$ by Lemma \ref{equiv_usual} \ref{equiv_usual_ii}, so $(x)_i$ and $(x)_k$ share no values. \label{item b} 
\end{enumerate}

Now suppose that $Var(uv)\subseteq X\cup Y$. The case when $Var(uv)\subseteq X$ (or similarly for $Y$) is similar to item \ref{item b}.

\begin{enumerate}[resume,label=(\alph*)]
        \item If $Var(u)\subseteq X$ and $Var(v)\subseteq Y$, then there is $\langle (x)_i, (y)_k \rangle \in S(u|v)$ with $(x)_i\neq (x)_k$ and $(y)_i\neq(y)_k$ (by Lemma \ref{equiv_usual} \ref{equiv_usual_i}), so $(x)_i, (y)_k$ have no common values. 

        \item If $Var(u)\subseteq X\setminus Y$ (or similarly for $Y\setminus X$) and $Var(v)\cap X$ and $Var(v)\cap Y$ are both nonempty, then there are $\langle (x)_i, (x)_k \rangle,  \langle (x)_l, (y)_m \rangle\in S(u|v)$ such that $(x)_k\not\in Y$ and $(y)_m\not\in X$. The only possible shared value for $(x)_l, (y)_m$ is at $s_1((x)_l)=s_2((y)_m)$, but $s_1((x)_i)\neq s_2((x)_k)$.

    \item If the intersections $Var(v)\cap X$, $Var(v)\cap Y$, $Var(u)\cap Y$, and $Var(v)\cap Y$ are all nonempty, either there are $\langle (x)_i, (y)_k \rangle, \langle (y)_l, (x)_m \rangle\in S(u|v)$ with $(x)_i\not\in Y$ and $(y)_k \not\in X$, and the only possible shared value for $(x)_i$ and $(y)_k$ is at $s_1((x)_i)=s_2((y)_k)$, but $s_1((y)_l)\neq s_2((x)_m)$. Or, there are $\langle (x)_i, (x)_k \rangle, \langle (y)_l, (y)_m \rangle \in S(u|v)$ where the only possible shared value for $(x)_i$ and $(x)_k$ is at $s_1((x)_i)=s_1((x)_k)$, but $s_1((y)_l)\neq s_1((y)_m)$.  
\end{enumerate}
\end{proof}

In the completeness proof, we construct a team with values from $\mathbb{N}$, and we note that if the set of variables in $\Sigma$ is finite, $|Z|=m$ and $|x|=|y|=n$, then we can construct a counterexample team given a model with a domain of size at least $3n+2m$.

We show in Theorem \ref{appr_dec} that the implication problem for finite sets of approximate exclusion atoms is decidable by constructing a polynomial time algorithm, from which the corresponding result follows for usual exclusion atoms. 

\begin{theorem}[Decidability]\label{dec}

Let $\Sigma\cup\{x|y\}$ be a finite set of exclusion atoms. The implication problem for whether $\Sigma\vdash x| y$ is decidable.
\end{theorem}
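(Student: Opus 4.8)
The plan is to reduce the syntactic question ``$\Sigma\vdash x|y$'' to a semantic one using the soundness lemma and the completeness theorem (Theorem~\ref{usual_compl}), and then to observe that the single two-assignment team constructed in the proof of Theorem~\ref{usual_compl} already decides semantic implication. Concretely, given the finite input $\Sigma\cup\{x|y\}$, I would build exactly the team $T=\{s_1,s_2\}$ from that proof: compute the equivalence relation on the positions of $xy$ generated by identifying positions $i,j$ whenever $(x)_i=(x)_j$ or $(y)_i=(y)_j$, set $s_1(x)=s_2(y)=\langle a_1,\dots,a_n\rangle$ respecting this relation, and give every remaining variable occurrence (in particular each $z_j$) a fresh value. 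This $T$ depends only on the finite input and is constructible in polynomial time in the total length of $\Sigma\cup\{x|y\}$ (e.g.\ the classes can be found with union--find); as noted after Theorem~\ref{usual_compl}, a domain of size $3n+2m$ suffices, where $|x|=|y|=n$ and $m$ is the number of variables of $\Sigma$ outside $Var(x)\cup Var(y)$, so the countability footnote in that proof causes no issue here.

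The key step is to record the equivalence
\begin{equation*}
  \Sigma\vdash x|y \quad\Longleftrightarrow\quad T\not\models u|v \ \text{ for some } u|v\in\Sigma .
\end{equation*}
For the right-to-left direction, suppose $T\not\models u|v$ for some $u|v\in\Sigma$. The case analysis in the proof of Theorem~\ref{usual_compl} shows that if $\Sigma\not\vdash x|y$, then $T\models u|v$ for every $u|v\in\Sigma$ (the excluded situations there, namely $u=v$ or $S(u|v)$ matching a case of Lemma~\ref{equiv_usual}, already yield $u|v\vdash x|y$ and hence $\Sigma\vdash x|y$); contrapositively, $T\not\models\Sigma$ gives $\Sigma\vdash x|y$. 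For the converse, if $\Sigma\vdash x|y$ then $\Sigma\models x|y$ by soundness, and since $s_1(x)=s_2(y)$ forces $T\not\models x|y$, the team $T$ cannot satisfy all of $\Sigma$, so $T\not\models u|v$ for some $u|v\in\Sigma$.

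The decision procedure is then immediate: construct $T$, and for each atom $u|v\in\Sigma$ test the at most four tuple equalities $s_i(u)=s_j(v)$ with $i,j\in\{1,2\}$; output ``$\Sigma\vdash x|y$'' iff some test yields an equality, and ``$\Sigma\not\vdash x|y$'' otherwise. Each test compares two tuples of length $|u|$, so the whole algorithm runs in time polynomial in the size of $\Sigma\cup\{x|y\}$, which in particular establishes decidability. I do not expect a genuine obstacle: all the mathematical content already lives in Theorem~\ref{usual_compl}, and the only points requiring a little care are bookkeeping --- that $T$ depends only on the finite input and is polynomially constructible, that $T\not\models x|y$ holds by construction (including the degenerate case $x=y$, and the case where $\Sigma$ contains a contradictory atom, which are absorbed automatically since then $T\not\models\Sigma$), and that a single team of two assignments suffices in place of a quantification over all teams.
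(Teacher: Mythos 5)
Your argument is correct, but it is a genuinely different route from the paper's. The paper never proves Theorem~\ref{dec} on its own: it obtains it as the special case $p=0$ of Theorem~\ref{appr_dec}, whose decision procedure (Algorithm~\ref{alg}) is purely \emph{syntactic} --- it computes the set representations $S(u|v)$ and the correspondence sets $\mathcal{C}_{(x)_i},\mathcal{C}_{(y)_i}$ and checks the subset conditions of Lemma~\ref{equiv_appr}, with correctness argued through that lemma and the completeness theorem. You instead turn the completeness construction itself into the algorithm: build the canonical two-assignment team $T$ of Theorem~\ref{usual_compl} and test whether $T\models\Sigma$; soundness plus $T\not\models x|y$ gives one direction of your equivalence, and the case analysis of Theorem~\ref{usual_compl} (whose excluded cases all yield $\Sigma\vdash x|y$) gives the other, so the biconditional $\Sigma\vdash x|y\iff T\not\models\Sigma$ is sound, and the four tuple comparisons per atom make the test polynomial. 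Your version is shorter and uses nothing beyond what the completeness proof already provides (and you sensibly take the \emph{generated} equivalence relation on positions so that $T$ is well defined, which the paper leaves implicit); the paper's syntactic algorithm buys uniformity --- it extends unchanged to approximate atoms with $p<\frac{1}{2}$, where your team would have to be replaced by the larger $k$-assignment team of Theorem~\ref{excl.compl}, and it does not depend on the internal details of the counterexample construction. Your closing remarks on the degenerate cases ($x=y$, or a contradictory atom in $\Sigma$) are right: any nonempty team falsifies $u|u$, so these are absorbed by the same test.
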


\section{Approximate exclusion atoms}

We define approximate exclusion atoms in an analogous way to the approximate dependence atoms in \cite{Väänänen2017}, motivated by both atoms being downward closed. 

\begin{definition}
Let $p$ be a real number such that $0 \leq p \leq 1$. $T\models x|_p y$ if and only if there is a subteam $T^\prime\subseteq T$, $|T^\prime| \leq p \cdot |T|$, such that $T \setminus T^\prime\models x| y$. 
\end{definition}
Thus $p=0$ coincides with usual exclusion, while an approximate exclusion atom with $p=1$ is always satisfied by the empty team property. For small approximations $p$, a team satisfying the approximate exclusion atom $x|_p y$ corresponds to the usual exclusion atom $x| y$ \textit{almost} being satisfied in the team. Thus the statement ''almost all individuals in last year's top $50$ ranking are not in this year's top $50$ ranking'' can be formalized as, e.g., $x_1|_{\frac{3}{50}} x_2$, meaning that for a team  $T=\{s_1,s_2,\dots,s_{50}\}$ such that for all $i\in\{1,2,\dots,50\}$, $s_i(x_1)$ is the name of the individual in last year's $i$:th place and $s_i(x_2)$ is the name of the individual in this year's $i$:th place, $T\models x_1|_{\frac{3}{50}} x_2$ holds if and only if at most three individuals remained in this year's top $50$ ranking.

We extend the definition of \textit{contradictory} atoms to include approximate exclusion atoms of the form $x|_p x$, $p<1$, since they too are only satisfied by the empty team.

As for the approximate dependence atoms in \cite{Väänänen2017}, locality fails for approximate exclusion atoms. We show this by an example, also illustrating that the approximate version of rule $E3$ in Definition \ref{def_usual_exl_rules} is not sound under a definition where teams are restricted to the variables in the atom they evaluate. 

\begin{example} \label{excl_weak_fail_restr}
Let the team $T\restriction_{xy}$ and $T\restriction_{xuyv}$ be as illustrated in Tables \ref{t2} and \ref{t3}, then  $T\restriction_{xy}\models x|_{\frac{1}{2}}y$ but $T\restriction_{xuyv}\not\models xu|_{\frac{1}{2}}yv$.

\begin{table}[ht]
\parbox{.5\linewidth}{
\centering
$\begin{array}{|cc|}
\hline	 	
x &  y   \\ 
\hline  
0 & 0  \\

1  & 2  \\
\hline 
\end{array}$
\caption{$T\restriction_{xy}$}\label{t2}
}
\hfill
\parbox{.5\linewidth}{
\centering
$\begin{array}
{|cccc|}
\hline 	
x & u & y & v  \\ 
\hline 
0 & 1 & 0 & 1 \\

0 & 2 & 0 & 2  \\

1 & 2 & 2 & 1  \\
\hline 
\end{array}$
\caption{$T\restriction_{xuyv}$}\label{t3}
} 
\end{table}

\end{example}

We define the rules for approximate exclusion atoms and show that they are sound.

\begin{definition}
    The rules for approximate exclusion atoms are: 

\begin{enumerate}[align=left]
\item[(A1)]$x|_p x\vdash y|_0 z$, for $p<1$ 

\item[(A2)] $x|_p y\vdash y|_p x$

\item[(A3)] 
$x|_p y\vdash xu|_p yv$

\item[(A4)] $xuu|_p yvv\vdash xu|_p yv$

\item[(A5)] 
$xyz|_p uvw\vdash xzy|_p uwv$\hfill ($|x|=|u|$ and $|y|=|v|$)  

\item[(A6)] $xw|_p yw\vdash zz|_p xy$ 

\item[(A7)]  $x|_q y \vdash x|_p y$, for $q\leq p\leq 1$

\item[(A8)] $\vdash x|_1 y$.

\end{enumerate}
\end{definition} 

\begin{lemma}[Soundness]
    The rules $A1-A8$ are sound.
\end{lemma}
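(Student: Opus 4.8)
The plan is to reduce the soundness of the ``structural'' rules A2--A6 to the soundness of the corresponding usual rules E2--E6 (already established in the Soundness lemma for exclusion atoms), and to handle A1, A7, A8 by direct arguments. For each of A2--A6 the premise has the form $\varphi|_p\psi$ and the conclusion has the form $\varphi^\prime|_p\psi^\prime$, where the pair $(\varphi^\prime,\psi^\prime)$ is obtained from $(\varphi,\psi)$ by exactly the transformation performed by the matching usual rule (and the arity side-condition of A5 is inherited verbatim from E5). I would start from an arbitrary team $T$ with $T\models\varphi|_p\psi$, unpack the definition to obtain a subteam $T^\prime\subseteq T$ with $|T^\prime|\le p\cdot|T|$ and $T\setminus T^\prime\models\varphi|\psi$, apply the matching usual rule to the team $T\setminus T^\prime$ to get $T\setminus T^\prime\models\varphi^\prime|\psi^\prime$, and conclude $T\models\varphi^\prime|_p\psi^\prime$ with the \emph{same} witness $T^\prime$. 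This settles A2 (via E2), A3 (via E3), A4 (via E4), A5 (via E5) and A6 (via E6) uniformly.

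For A1, the observation is that $x|_p x$ with $p<1$ is contradictory: if $T\models x|_p x$ then some $T^\prime\subseteq T$ with $|T^\prime|\le p\cdot|T|$ satisfies $T\setminus T^\prime\models x|x$, and since $x|x$ holds only in the empty team, $T\setminus T^\prime=\emptyset$, i.e. $T=T^\prime$, so $|T|\le p\cdot|T|$ with $p<1$ forces $|T|=0$. Hence $T=\emptyset$ and $T\models y|_0 z$ by the empty team property. For A7, if $T\models x|_q y$ with witness $T^\prime$, then $|T^\prime|\le q\cdot|T|\le p\cdot|T|$ since $q\le p$, so the same $T^\prime$ witnesses $T\models x|_p y$. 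For A8, taking $T^\prime=T$ gives $|T^\prime|=|T|\le 1\cdot|T|$ and $T\setminus T^\prime=\emptyset\models x|y$, so $T\models x|_1 y$ for every team $T$.

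I do not anticipate a real obstacle: the only point requiring care is that in A2--A6 the underlying team is not changed by the rule, only the syntactic form of the exclusion atom evaluated on the fixed subteam $T\setminus T^\prime$, so the witnessing subteam carries over automatically. The mildly subtle cases are A1 (where one must note that the contradictory premise forces $T=\emptyset$, after which the conclusion holds by the empty team property) and the inequality bookkeeping $q\cdot|T|\le p\cdot|T|$ in A7.
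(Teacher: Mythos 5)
Your proposal is correct and follows essentially the same route as the paper: the paper likewise reduces $A1$--$A6$ to the soundness of $E1$--$E6$ (your explicit observation that the witnessing subteam $T^\prime$ carries over unchanged is exactly the detail the paper leaves implicit behind ``soundness follows similarly''), and handles $A7$ by the same inequality and $A8$ by the empty team property. No gaps.
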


\begin{proof}
 The rules $A1-A6$ are the approximate versions of $E1-E6$ in Definition \ref{def_usual_exl_rules}, and soundness follows similarly. For $A7$, if already $T\models x|_p y$, then for $q\geq p$, $T\models x|_q y$. For $A8$, $T\models x|_1 y$ always holds by the empty team property.
\end{proof}

We extend the definition of the \textit{set representation} to approximate exclusion atoms by defining $S(x|_p y)=S(x| y)$. Now two approximate exclusion atoms with the same degree of approximation have the same set representation only if they are semantically equivalent. We also extend the definition of the \textit{end-constant form} to approximate exclusion atoms $u|_q v$ as $u^\prime c|_q v^\prime c$ analogously.

\begin{lemma}\label{equiv_appr} Suppose that $|x|=|y|=n$, $u\neq v$ and $q\leq p<\frac{1}{2}$. Let $uc|_q vc$ be in end-constant form.
    \begin{enumerate}[label=(\roman*)]
        \item $u|_q v\vdash_{\{A3, A4, A5, A7\}} x|_p y$ if and only if $S(u|_q v)\subseteq S(x|_p y)$.\label{equiv_appr_i}
        
        \item Suppose that $S(uc|_q vc)\not\subseteq S(x|_p y)$ and $S(vc|_q uc)\not\subseteq S(x|_p y)$, then $uc|_q vc\vdash x|_p y$ if and only if there is 
 $d\in\{x,y\}$ such that \begin{align*}
     S(uc|_q vc)\subseteq& \bigcup_{1\leq i\leq n}\mathcal{C}_{(d)_i}\times \mathcal{C}_{(d)_i}\cup\{\langle z_l,z_l\rangle  :  z_l\in \mathcal{V}\}.
 \end{align*}\label{equiv_appr_ii}
    \end{enumerate}
\end{lemma}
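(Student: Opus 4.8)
The plan is to mirror the proof of Lemma~\ref{equiv_usual} as closely as possible, since the approximate rules $A3,A4,A5,A7$ differ from $E3,E4,E5$ only by carrying a degree of approximation $q$ along unchanged, plus the single monotonicity step $A7$ that lets $q$ increase to $p$. The key observation to isolate first is that in any derivation witnessing $uc|_q vc \vdash x|_p y$, the degree of approximation only ever weakens (all of $A1$–$A8$ either keep $p$ fixed or, via $A7$/$A8$, move to a larger value), so $q \le p$ is a necessary side condition and, given it, applications of $A7$ can be postponed to the very end of the derivation. Thus a derivation of $x|_p y$ from $uc|_q vc$ using only $\{A3,A4,A5,A7\}$ can be rearranged as $uc|_q vc \vdash_{\{A3,A4,A5\}} u'|_q v' \vdash_{\{A7\}} x|_p y$, and the latter step is possible exactly when $S(u'|v') = S(x|y)$; combined with item~\ref{equiv_usual_i} of Lemma~\ref{equiv_usual} applied to the underlying usual atoms, this gives $u'|v' \vdash_{\{E3,E4,E5\}} x|y$ iff $S(u|v) \subseteq S(x|y)$, which is precisely \ref{equiv_appr_i}.

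**Part (i).** For the forward direction, given an $\{A3,A4,A5,A7\}$-derivation, strip off the $A7$ steps as above and invoke Lemma~\ref{equiv_usual}\ref{equiv_usual_i} on the corresponding plain exclusion atoms to get $S(u|v) \subseteq S(x|y)$, hence $S(u|_q v) \subseteq S(x|_p y)$ by definition of the extended set representation. For the converse, from $S(u|_q v) \subseteq S(x|_p y)$ we get $S(u|v) \subseteq S(x|y)$, so $u|v \vdash_{\{E3,E4,E5\}} x|y$ by Lemma~\ref{equiv_usual}\ref{equiv_usual_i}; the same sequence of rule applications, read as $A3,A4,A5$, yields $u|_q v \vdash x|_q y$, and one final application of $A7$ (legitimate since $q \le p$) gives $u|_q v \vdash x|_p y$. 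Note the hypothesis $q \le p < \tfrac12$ is not actually needed for part~\ref{equiv_appr_i} beyond $q \le p$; it is carried in the statement for uniformity with part~\ref{equiv_appr_ii}.

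**Part (ii).** Here I would follow the structure of the proof of Lemma~\ref{equiv_usual}\ref{equiv_usual_ii} verbatim, replacing $\vdash$ by the approximate $\vdash$ throughout. By the two non-containment hypotheses and part~\ref{equiv_appr_i}, rule $A6$ must appear in the derivation; as in the usual case, one application of $A6$ suffices because after it all variables of $uv$ sit on the same side and no later rule can separate them again, while $A7$ and $A8$ only change the approximation. So the derivation has the shape $uc|_q vc \vdash_{\{A6\}} zz|_q uv \vdash_{\{A3,A4,A5,A7\}} x|_p y$ (up to the symmetric $d=y$ case), and then part~\ref{equiv_appr_i} turns the second half into the set-containment $S(zz|uv) \subseteq S(x|y)$. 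From this point the chain of equivalences is identical to the one displayed in the proof of Lemma~\ref{equiv_usual}\ref{equiv_usual_ii}, arriving at $S(uc|_q vc) \subseteq \bigcup_{1\le i\le n}\mathcal{C}_{(x)_i}\times\mathcal{C}_{(x)_i} \cup \{\langle z_l,z_l\rangle : z_l\in\mathcal{V}\}$.

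**Main obstacle.** The only non-mechanical point is justifying the normal-form claim that $A7$ (and $A8$) applications can be pushed to the end and that a single use of $A6$ suffices even in the presence of the approximation bookkeeping — i.e., that interleaving monotonicity steps cannot enable an otherwise-impossible structural rearrangement. This needs the remark that the structural rules $A3,A4,A5,A6$ act on the set representation exactly as their unmarked counterparts regardless of the subscript, so the subscript is a passive label; once that is said, everything reduces cleanly to Lemma~\ref{equiv_usual}. The role of $p < \tfrac12$ does not enter this particular lemma's proof but is the hypothesis under which the end-constant form and the later completeness argument remain well-behaved, so I would simply carry it and not use it here.
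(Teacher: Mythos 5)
Your proposal is correct and follows essentially the same route as the paper, which disposes of part~\ref{equiv_appr_i} as immediate from the definitions of $A3$, $A4$, $A5$, $A7$ and of part~\ref{equiv_appr_ii} as analogous to Lemma~\ref{equiv_usual}\ref{equiv_usual_ii}; your explicit normal-form observation (that the subscript is passive for the structural rules, that $A7$/$A8$ only weaken the approximation so $A8$-started chains can never reach $p<1$, and that $A7$ commutes to the end) is exactly the justification the paper leaves implicit.
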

\begin{proof}
\begin{enumerate}[label=(\roman*)]
    \item Immediate by the definitions of the rules $A3$, $A4$, $A5$ and $A7$.

    \item Analogous to the proof of Lemma \ref{equiv_usual}  \ref{equiv_usual_ii}. 
\end{enumerate}
\end{proof}

Before we prove the completeness theorem, let us remark that all sets of non-contradictory approximate exclusion atoms are satisfiable.

\begin{remark} \label{canonical_team}
   Note that any set $\Sigma$ of non-contradictory approximate exclusion atoms is satisfied by a unary team where all variables obtain values that occur only once in the team. See Table \ref{table_lemma} for an example. 
    
   \begin{table}[h]\centering $\begin{array}{|ccccc}
    \hline
           z_1  &  z_2 & z_3& z_4 & \dots \\
           \hline
       1  &  2 & 3& 4 & \dots \\
       \hline
    \end{array}$\caption{The canonical team in Remark \ref{canonical_team}.}\label{table_lemma}\end{table}
\end{remark}

Next, we generalize the counterexample team in Theorem \ref{usual_compl} and show completeness for consequences with approximations $p<\frac{1}{2}$, thus the rule $A8$ can be omitted from the system.

\begin{theorem}[Completeness]\label{excl.compl}
Let $\Sigma\cup\{x|_p y\}$ be a set of approximate exclusion atoms with $|x|=|y|=n$, $0\leq p<\frac{1}{2}$ such that if there are $u|_q v \in\Sigma$ with $q>p$, then $r=min\{q>p : \, \,u|_q v\in\Sigma\}$ exists.\footnote{As for approximate dependence atoms in \cite{Väänänen2017}, we need to avoid infinite consequences of the form $\{x|_{\frac{1}{n}} y : n\in\mathbb{N}\}\models x| y$, since for a recursive assumption set the decidability of the logical consequence would allow us to encode the halting problem.} If $\Sigma\models x|_p y$ then $\Sigma\vdash x|_p y$.
\end{theorem}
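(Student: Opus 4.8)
The plan is to mimic the completeness proof for usual exclusion atoms (Theorem~\ref{usual_compl}), but with a counterexample team of size more than two so that a subteam of relative size $<\tfrac12$ cannot be thrown away to rescue $x|y$. Assume $\Sigma\not\vdash x|_p y$; we must build a team $T$ with $T\models u|_q v$ for every $u|_q v\in\Sigma$ but $T\not\models x|_p y$. The idea is to start from the two-assignment team $T_0=\{s_1,s_2\}$ of Theorem~\ref{usual_compl} realizing exactly the equalities forced by $X=Var(x)$ and $Y=Var(y)$ on the row where $s_1(x)=s_2(y)$, and then \emph{pad} it with $k$ copies of the canonical all-distinct unary-style assignment from Remark~\ref{canonical_team} (each padding row giving every variable a globally fresh value), choosing $k$ large enough that $2/(k+2)>p$, equivalently $k=\lceil 2/p\rceil-1$ or so; when $p=0$ no padding is needed and we recover the old team. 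Since padding rows use only fresh values, no exclusion constraint is affected by them: for any $u|v$, $T\models u|v$ iff $T_0\models u|v$, and if $T_0\not\models x|y$ then in the padded team every deletion witnessing $x|_p y$ would have to remove $s_1$ or $s_2$, but $|\{s_1\}|=1>p|T|$ by choice of $k$, so $T\not\models x|_p y$.

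Next I would handle the assumption atoms. Fix $u|_q v\in\Sigma$. If $q\geq r$ (or, when no such $r$ exists, if $q>p$), then $p<q$ and actually the \emph{gap} hypothesis is what we exploit: by Remark~\ref{canonical_team} any single non-contradictory atom is satisfied by an all-fresh unary team, and more to the point, if $q$ is large enough relative to the team size, $T\models u|_q v$ comes for free because we may delete $s_1$ (or $s_2$) — so we must make sure that for every $u|_q v\in\Sigma$ with $q>p$ we have $q\geq r>p$ and that $k$ is chosen with $1/(k+2)\leq r$, i.e.\ also $k\geq 1/r-2$; this is exactly where the existence of $r=\min\{q>p:u|_q v\in\Sigma\}$ is used, guaranteeing a single threshold that simultaneously beats $p$ from above and is beaten by $1/(k+2)$ from below. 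So $T\models u|_q v$ for all such atoms by deleting one row. For the atoms with $q\leq p$ — in particular with $q<\tfrac12$ so Lemma~\ref{equiv_appr} applies — we need the genuine argument that $T_0\models u|v$ (hence $T\models u|v$, and then $T\models u|_q v$ a fortiori by soundness of $A7$).

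That genuine argument is the transcription of the case analysis in the proof of Theorem~\ref{usual_compl}, now with Lemma~\ref{equiv_appr} in place of Lemma~\ref{equiv_usual}. First, if $u=v$ then $u|_q v$ is contradictory (as $q<1$), contradicting satisfiability of $\Sigma$ together with $\Sigma\not\vdash x|_p y$ via $A1$; and if $S(u|_q v)$ — equivalently $S(uc|_q vc)$ in end-constant form — satisfies either $S(uc|_q vc)\subseteq S(x|_p y)$ or the $E6$-type inclusion of Lemma~\ref{equiv_appr}\ref{equiv_appr_ii}, then $\Sigma\vdash x|_p y$ by that lemma, contradiction. For all remaining $u|_q v$ with $q\leq p$, we run the same subcase split on where the variables of $uv$ lie relative to $X$, $Y$, $Z$: whenever a pair $\langle(u)_i,(v)_i\rangle$ with $(u)_i\neq(v)_i$ forces a shared value only at the row $s_1(x)=s_2(y)$, Lemma~\ref{equiv_appr}\ref{equiv_appr_ii} supplies a second pair on that same side that is \emph{not} realized there (just as in items (a)–(f) of Theorem~\ref{usual_compl}), and padding rows contribute nothing since their values are all distinct; hence $T_0\not\models u|v$ would already be impossible, giving $T\models u|v$. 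I expect the main obstacle to be bookkeeping the two competing constraints on $k$ — it must be large enough that $2/(k+2)>p$ (so $x|_p y$ cannot be repaired) yet the padding must not accidentally let some $u|_q v$ with $q\leq p$ fail; the latter is fine because padding rows are value-disjoint from everything, so really the only delicate point is packaging the choice of $r$ and $k$ cleanly, and verifying that deleting the single row $s_1$ (size $1$, fraction $1/(k+2)\leq r$) genuinely witnesses $u|_r v$ and hence every $u|_q v$ with $q\geq r$.
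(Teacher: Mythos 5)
Your overall strategy is the paper's: build a finite team whose ``defect'' for $x|y$ (the minimum number of rows that must be deleted to satisfy $x|y$) is a fraction of the team size lying strictly above $p$ but at most $r$, pad with value-disjoint rows that affect nothing, and reuse the case analysis of Theorem~\ref{usual_compl} (via Lemma~\ref{equiv_appr}) for the assumptions with $q\leq p$. The gap is in the team construction. By keeping a single violating pair $s_1,s_2$ and padding with $k$ fresh rows, the defect of your team for $x|y$ is a constant (deleting $s_1$ alone already repairs $x|y$, so it is $1$), hence the only defect ratios you can realize are of the form $1/(k+2)$. You therefore need an integer $m=k+2$ with $p<\tfrac{1}{m}\leq r$, and such an $m$ need not exist: take $p=\tfrac{2}{5}$ and let $\Sigma$ contain an atom $u|_{9/20}v$ with $r=\tfrac{9}{20}$ the least approximation above $p$ and with $S(u|v)\subseteq S(x|y)$. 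Then $\Sigma\not\vdash x|_p y$ (rule $A7$ only increases approximations), but the requirement $\tfrac{20}{9}\leq m<\tfrac{5}{2}$ has no integer solution: with $m=2$ your team must satisfy $u|_{9/20}v$ without deleting any row (since $\lfloor\tfrac{9}{20}\cdot 2\rfloor=0$), which it cannot do because $u|v\models x|y$ and the team falsifies $x|y$; with $m\geq 3$ one deletion is permitted for $x|_{2/5}y$, so the team is not a counterexample at all. (Your arithmetic also wavers between $1/(k+2)$ and $2/(k+2)$, and ``$k$ large enough that $2/(k+2)>p$'' goes the wrong way, but the obstruction above is the substantive one.)

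The paper avoids this by making the defect itself a parameter: it chooses a rational $\tfrac{l}{k}$ with $p<\tfrac{l}{k}\leq r$ (always possible, since $(p,\min\{r,\tfrac12\}]$ is a nonempty real interval, which also guarantees $k\geq 2l$) and builds a team of size $k$ containing $l$ \emph{disjoint} violating pairs $s_1(x)=s_{l+1}(y),\dots,s_l(x)=s_{2l}(y)$ plus $k-2l$ all-fresh rows. Then at least $l$ of the $k$ rows must be deleted to repair $x|y$, exceeding the allowance $p\cdot k$, while every $u|_q v$ with $q>p$ is granted at least $l$ deletions. If you replicate your violating pair $l$ times in this way, the remainder of your argument goes through essentially as you wrote it.
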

\begin{proof}

Let $0\leq p<\frac{1}{2}$ and assume that $\Sigma\not\vdash x|_p y$. We show that $\Sigma\not\models x|_p y$. Construct a team $T$ of size $k$ such that $p< \frac{l}{k}\leq r$, where $r=min\{q>p : \, \,u|_q v\in\Sigma\}$, for some positive integer $l$.\footnote{If no approximate exclusion atom in $\Sigma$ has approximation $q>p$, then $l=1$ and $k=2$ suffices.} Construct the team $T$ with values from from $\mathbb{N}$ occuring only once, except for $s_1(x)=s_{l+1}(y),$ $\dots,$ $s_{l}(x)=s_{2l}(y)$, i.e., to satisfy $x|y$ we would have to remove at least $l$ lines, and for all $e\in\{1,\dots,l\}$, $e^\prime\in\{l+1,\dots,2l\}$ and $i,j\in \{1,\dots,n\}$, $s_e((x)_i)=s_{e}((x)_j)$ and $s_{e^\prime}((y)_i)=s_{e^\prime}((y)_j)$ if and only if $(x)_i=(x)_j$ or $(y)_i=(y)_j$.

\begin{table}[t]\centering
$\begin{array}{|c| ccc| ccc |ccc}
\hline	
&(x)_1 & \dots & (x)_n & (y)_1 & \dots & (y)_n & z_1 & z_2& \dots\\ 
\hline
 s_1& \cellcolor{Graylight}a_1 &\cellcolor{Graylight}  -&\cellcolor{Graylight} a_n 
  &  
  & \dots
  &
  & 
  &
  & \dots
  \\

 s_2& \cellcolor{Graylight}b_1 &\cellcolor{Graylight}  -& \cellcolor{Graylight}b_n 
  & 
  & \dots
  &  
  & 
  &
  & \dots
  \\
 
 s_3&  
  & \dots
  & 
  & \cellcolor{Graylight}a_1 & \cellcolor{Graylight} -&\cellcolor{Graylight} a_n
  & 
  &
  & \dots
  \\

 s_4& 
  & \dots
  &
  &\cellcolor{Graylight}b_1  
  & \cellcolor{Graylight}-
  & \cellcolor{Graylight}b_n  
  & 
  &
  & \dots
  \\

 s_5&  &  \dots& 
  &  
  & \dots
  & 
  & 
  &
  & \dots
  \\
\hline
\end{array}$  \caption{Counterexample team for approximate exclusion atoms where $\Sigma\cup\{x|_p y\}$ and $r=min\{q>p : u|_q v\in\Sigma\}$ are such that  $p<\frac{2}{5}\leq r$.} 
\end{table}
Clearly, $T\not\models x|_p y$. We show that $T\models u|_q v$ for all approximate exclusion atoms $u|_q v\in\Sigma$. Let $u|_q v\in\Sigma$. If $u=v$ and $0\leq q<1$, then we derive a contradiction. If $S(u|_q v)$ satisfies any of the cases in Lemma \ref{equiv_appr} and $q\leq p$, then we can derive a contradiction, and if $q>p$ then we are allowed to remove at least $l$ lines, so $T\models u|_q v$. For all other cases, we can show similarly to the proof of Theorem \ref{usual_compl}, that $T\models u|_0 v$, from which $T\models u|_q v$ follows. 
\end{proof}

For a counterexample team constructed according to the ratio $\frac{l}{k}$, if the set of variables in $\Sigma\cup \{x|_p y\}$ is finite, $|x|=|y|=n$ and there are $m$ many variables in $Z$, then we can construct a counterexample team given a model with a domain of size at least $3ln+2lm+(k-2l)(2n+m)$.

As a direct consequence of the form of the complete proof system for approximate exclusion atoms, we obtain consequence compactness in a very strong sense: Let $\Sigma\cup\{x|_p y\}$ be as in Theorem \ref{excl.compl}, if $\Sigma\models x|_p v$, then there is some $u|_q v\in\Sigma$ such that $u|_q v\models x|_p y$.

We end the section by defining Algorithm \ref{alg}, showing that the finite implication problem for approximate exclusion atoms is decidable in polynomial time.

\begin{algorithm}[t]
\caption{$\Sigma\vdash x|_p y$?}\label{alg}
\begin{algorithmic}[1] 
\Require Finite set $\Sigma\cup\{x|_p y\}$ of approximate exclusion atoms with $|x|=|y|=n$ and \hspace*{.63cm}  $0\leq p<\frac{1}{2}$
\Ensure TRUE if $\Sigma\vdash x|_p y$, FALSE otherwise

\If{$x|_q y\in \Sigma$ or $y|_q x\in \Sigma$ with $q\leq p$} \Return TRUE \label{by a2}\EndIf
\If{there exists $u|_q v\in \Sigma$ with $u=v$ and $q<1$} \Return TRUE\label{by a1}\EndIf

 \If{$x=y$} \Return FALSE \label{false by a1}\EndIf

  \State $\mathcal{V}:=\{w_l :  w_l\text{ occurs in  }\Sigma\cup\{x|_p y\}\}$

 \State $S(x|_p y):=\{\langle(x)_1,(y)_1\rangle, \langle(x)_2,(y)_2\rangle,\dots ,\langle(x)_n,(y)_n\rangle\}$

 \State $ \mathcal{C}_{(x)_i}:=\{y_j :  \langle(x)_i,y_j\rangle\in S(x|_p y)\}$ for all $1\leq i\leq n$

 \State $ \mathcal{C}_{(y)_i}:=\{x_j :  \langle x_j,(y)_i\rangle\in S(x|_p y)\}$ for all $1\leq i\leq n$

\For{$u|_q v\in \Sigma$ with $q\leq p$ \label{loop}} 

 \State  $S(u|_q v):=\{\langle(u)_1,(v)_1\rangle, \langle(u)_2,(v)_2\rangle \dots ,\langle(u)_n,(v)_n\rangle\}$

 \If{$S(u|_q v)\subseteq S(x|_p y)$} \Return  TRUE\label{by all} \EndIf

  \If{$S(v|_q u)\subseteq S(x|_p y)$} \Return TRUE \label{by all and sym}\EndIf

 \If{
 $S(u|_q v)\subseteq \bigcup_{1\leq i\leq n} \mathcal{C}_{(d)_i}\times  \mathcal{C}_{(d)_i}\cup \{\langle w_l,w_l\rangle : w_l\in \mathcal{V}\} $ for some $d\in\{x,y\}$} \hspace*{.46cm} \Return TRUE \label{by a6}\EndIf

\EndFor
 \State \Return FALSE \label{by lemma}

\end{algorithmic}
\end{algorithm}


\begin{theorem}[Decidability]\label{appr_dec}

Let $\Sigma\cup\{x|_p y\}$, $p<\frac{1}{2}$, be a finite set of approximate exclusion atoms with $|x|=|y|=n$. The implication problem for whether $\Sigma\vdash x|_p y$ is decidable. 
\end{theorem}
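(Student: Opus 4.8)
The plan is to show that Algorithm~\ref{alg} decides $\Sigma\vdash x|_p y$ and runs in polynomial time. Termination and the polynomial bound are immediate: after a polynomial-time preprocessing step (forming $\mathcal V$, the set $S(x|_p y)$ and the correspondence sets $\mathcal C_{(x)_i},\mathcal C_{(y)_i}$), the algorithm makes a single pass over the finite set $\Sigma$, each iteration testing a bounded number of inclusions between sets of pairs of size $O(n+|\mathcal V|^2)$. Everything else is about correctness.

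The key structural fact is that every rule $A1$--$A8$ has at most one premise (only $A8$ has none). Hence a derivation, read from its conclusion upward, is a single branch, so for a target with $p<1$ we have $\Sigma\vdash x|_p y$ if and only if some single atom $u|_q v\in\Sigma$ satisfies $u|_q v\vdash x|_p y$ -- the only axiom, $A8$, yields atoms of approximation $1$, which are of no use here. Next I would observe that a derivation starting from a non-contradictory atom stays among non-contradictory atoms (checking this for each of $A2$--$A6$; $A7$ trivially), so $A1$, whose premise must be contradictory, is never applicable on such a branch, and among $A2$--$A7$ only $A7$ changes the approximation, and only upward. Therefore, if $\Sigma$ has no contradictory atom, every single witness of $\Sigma\vdash x|_p y$ has the form $u|_q v$ with $u\neq v$ and $q\leq p<\frac{1}{2}$; and if moreover $x=y$, then $\Sigma\not\vdash x|_p y$, because $x|_p x$ is contradictory whereas $\Sigma$ is satisfied by the nonempty canonical team of Remark~\ref{canonical_team} (using soundness). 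This pins down lines~\ref{by a1} and~\ref{false by a1} and the restriction ``$q\leq p$'' of the loop.

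Soundness of the algorithm is then a branch-by-branch check: line~\ref{by a2} uses $A2$ and $A7$; line~\ref{by a1} uses $A1$ then $A7$; lines~\ref{by all} and~\ref{by all and sym} are the forward direction of Lemma~\ref{equiv_appr}\,\ref{equiv_appr_i} (the latter after one application of $A2$); and line~\ref{by a6} is the forward direction of Lemma~\ref{equiv_appr}\,\ref{equiv_appr_ii}, whose hypotheses $S(u|_q v)\not\subseteq S(x|_p y)$ and $S(v|_q u)\not\subseteq S(x|_p y)$ hold precisely because lines~\ref{by all} and~\ref{by all and sym} did not return in that iteration. For completeness, assume $\Sigma\vdash x|_p y$. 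If $\Sigma$ has a contradictory atom, line~\ref{by a1} returns TRUE. Otherwise $x\neq y$, so line~\ref{false by a1} is not taken, and by the structural fact there is $u|_q v\in\Sigma$ with $u\neq v$, $q\leq p$ and $u|_q v\vdash x|_p y$, an atom visited by the loop. Since $A6$ need be used at most once and every occurrence of $A2$ can be moved to the front of the derivation ($A2$ commutes with $A3,A4,A5,A7$ up to relabelling and is an involution), the witnessing derivation either uses no $A6$ -- so by Lemma~\ref{equiv_appr}\,\ref{equiv_appr_i} one of $S(u|_q v)\subseteq S(x|_p y)$, $S(v|_q u)\subseteq S(x|_p y)$ holds and line~\ref{by all} or~\ref{by all and sym} fires -- or it genuinely uses $A6$, whence those two inclusions fail and Lemma~\ref{equiv_appr}\,\ref{equiv_appr_ii} forces the union condition for some $d\in\{x,y\}$, so line~\ref{by a6} fires. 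Thus the algorithm outputs TRUE exactly when $\Sigma\vdash x|_p y$; taking $p=0$ likewise yields the decidability claimed in Theorem~\ref{dec}.

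The step I expect to be the main obstacle is the reduction in the second paragraph: establishing rigorously that, because all rules are unary and the approximation degree never decreases along a derivation from a non-contradictory premise, the implication problem collapses to the finitely many single-premise tests characterised by Lemma~\ref{equiv_appr} together with the two degenerate cases (a contradictory atom in $\Sigma$, or $x=y$). Once that is secured, each branch of Algorithm~\ref{alg} is a transcription of a case of Lemma~\ref{equiv_appr}; the only fiddly bookkeeping is the end-constant form, which causes no trouble since $S(u|_q v)=S(uc|_q vc)$ and the two atoms are interderivable by $A5$.
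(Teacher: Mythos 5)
Your proposal is correct and follows essentially the same route as the paper: it establishes that Algorithm~\ref{alg} is sound and complete by reducing $\Sigma\vdash x|_p y$ to single-premise tests characterised by Lemma~\ref{equiv_appr}, handling the contradictory cases via rule $A1$ and the canonical team of Remark~\ref{canonical_team}. You merely make explicit some steps the paper leaves implicit (the single-branch structure of derivations, the monotonicity of the approximation degree, and the commutation of $A2$), which is a welcome elaboration rather than a different argument.
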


\begin{proof}
We show that Algorithm \ref{alg} is sound and complete. First note that the algorithm halts for any input $\Sigma\cup\{x|_p y\}$, since $\Sigma$ is finite. 

If the algorithm returns TRUE, then $\Sigma\vdash x|_p y$ follows by the rules $A1-A7$ and Lemma \ref{equiv_appr}.

If the algorithm returns FALSE at  
step \ref{false by a1}, then $x|_p y$ is contradictory and there is no contradictory atom in $\Sigma$, so by 
Remark \ref{canonical_team} and Theorem \ref{excl.compl}, $\Sigma\not\vdash x|_p y$.
If the algorithm returns FALSE at step \ref{by lemma}, then the conditions of steps \ref{by all}, \ref{by all and sym} and \ref{by a6} are not satisfied for any atom in $\Sigma$. So by Lemma \ref{equiv_appr} \ref{equiv_appr_i} we can not derive $x|_p y$ without rule $A6$. Also, by Lemma \ref{equiv_appr} \ref{equiv_appr_ii}, the derivation does not allow the use of rule $A6$, so $\Sigma\not\vdash x|_p y$.\end{proof} 

\section{Conclusion and directions for further research}

We axiomatized the unrestricted implication problem for exclusion atoms and showed that the finite implication problem is decidable. By extending first-order logic with exclusion atoms, we obtain a logic that is expressively equivalent to dependence logic, which has partial axiomatizations \cite{kontinen_axiomatizing_2013,yang_negation_2019}, and one could consider doing the same for first-order logic with exclusion atoms. 

In the propositional setting, the expressive power of propositional logic with exclusion atoms is not yet known, and its axiomatization is also missing. For finite sets of (approximate) exclusion atoms, the completeness proofs in this paper can be encoded using propositional variables, by simply introducing enough propositional variables to encode all values in the counterexample teams in binary. However, we note the following fact about propositional exclusion atoms, rendering them more expressive than their first-order counterparts:   $$T\models p_1|p_2 \text{ if and only if } T\models =\hspace*{-.1cm}(p_1)\land =\hspace*{-.1cm}(p_2)\land p_1\neq p_2,$$ 
where $=\hspace*{-.1cm}(q_1)$ is the unary constancy atom defined by $T\models =\hspace*{-.1cm}(q_1)$ if and only if for all $s_1,s_2\in T$, $s_1(q_1)=s_2(q_1)$.

We defined and axiomatized approximate exclusion atoms for consequences with an approximation $p<\frac{1}{2}$, and with a restriction on the approximations in infinite assumption sets. 
The definition is analogous to the one for approximate dependence atoms in \cite{Väänänen2017}, and one could consider defining and axiomatizing approximate versions of other downward closed dependencies such as the degenerated dependency in \cite{Thalheim}. The approximate versions of non-downward closed dependencies, like inclusion, likely have to be captured with a different definition.

\section*{Acknowledgments}
The author was supported by the Vilho, Yrjö and Kalle Väisälä Foundation, Academy of Finland projects 336283 and 345634, and the Doctoral Programme in Mathematics and Statistics at the University of Helsinki. The results are part of an ongoing project with Åsa Hirvonen, whose advice regarding this paper has been very useful. The author also wants to thank Fan Yang and Miika Hannula for helpful discussions. The initial idea for the formulation of the algorithm was by Minna Hirvonen, who also gave suggestions regarding notation.

\printbibliography

@article{hodges1997,
	title = {Compositional semantics for a language of imperfect information},
	volume = {5},
	doi = {10.1093/jigpal/5.4.539},
	language = {en},
	number = {4},
	
	journal = {Logic Journal of IGPL},
	author = {Hodges, Wilfrid},
	month = jul,
	year = {1997},
	pages = {539--563},
}

@incollection{hodges19972,
	address = {Berlin, Heidelberg},
	series = {Lecture {Notes} in {Computer} {Science}},
	title = {Some strange quantifiers},
	doi = {10.1007/3-540-63246-8_4},
	abstract = {We report a recent Tarski-style semantics for a language which includes the branching quantifiers on which Andrzej Ehrenfeucht made the first breakthrough in 1958. The semantics is equivalent to Henkin's game-theoretic semantics, but unlike Henkin's it is compositional. We use second-order formulas to give a new (and with any luck, more manageable) description of this Tarski-style semantics. Finally we apply the new description to present a compositional and fully abstract semantics for the slightly more limited syntax of Hintikka and Sandu, answering a question of Sandu.},
	language = {en},

	booktitle = {Structures in {Logic} and {Computer} {Science}: {A} {Selection} of {Essays} in {Honor} of {A}. {Ehrenfeucht}},
	publisher = {Springer},
	author = {Hodges, Wilfrid},
	editor = {Mycielski, Jan and Rozenberg, Grzegorz and Salomaa, Arto},
	year = {1997},
	keywords = {Atomic Formula, Compositional Semantic, Existential Quantifier, Universal Quantifier, Winning Strategy},
	pages = {51--65},
}

@article{yang_negation_2019,
	series = {The 23rd {Workshop} on {Logic}, {Language}, {Information} and {Computation}},
	title = {Negation and partial axiomatizations of dependence and independence logic revisited},
	volume = {170},
	doi = {10.1016/j.apal.2019.04.010},
	abstract = {In this paper, we axiomatize the negatable consequences in dependence and independence logic by extending the systems of natural deduction of the logics given in [22] and [11]. We prove a characterization theorem for negatable formulas in independence logic and negatable sentences in dependence logic, and identify an interesting class of formulas that are negatable in independence logic. Dependence and independence atoms, first-order formulas belong to this class. We also demonstrate our extended system of independence logic by giving explicit derivations for Armstrong's Axioms and the Geiger-Paz-Pearl axioms of dependence and independence atoms.},
	number = {9},
	journal = {Ann Pure Appl Log},
	author = {Yang, Fan},
	month = sep,
	year = {2019},
	keywords = {Dependence logic, Existential second-order logic, Negation, Team semantics},
	pages = {1128--1149},
}

@article{kontinen_axiomatizing_2013,
	series = {Special issue on {Seventh} {Workshop} on {Games} for {Logic} and {Programming} {Languages} ({GaLoP} {VII})},
	title = {Axiomatizing first-order consequences in dependence logic},
	volume = {164},
	doi = {10.1016/j.apal.2013.05.006},
	abstract = {Dependence logic, introduced in Väänänen (2007) [11], cannot be axiomatized. However, first-order consequences of dependence logic sentences can be axiomatized, and this is what we shall do in this paper. We give an explicit axiomatization and prove the respective Completeness Theorem.},
	number = {11},
	journal = {Ann Pure Appl Log},
	author = {Kontinen, Juha and Väänänen, Jouko},
	month = nov,
	year = {2013},
	keywords = {Dependence logic, Game logic, Independence friendly logic, Independence logic, Natural deduction},
	pages = {1101--1117},
}

@article{KIVINEN1995129,
title = {Approximate inference of functional dependencies from relations},
journal = {Theoretical Computer Science},
volume = {149},
number = {1},
pages = {129-149},
year = {1995},
note = {Fourth International Conference on Database Theory (ICDT '92)},
doi = {10.1016/0304-3975(95)00028-U},
author = {Jyrki Kivinen and Heikki Mannila},
abstract = {The functional dependency inference problem is the following. Given a relation r, find a set of functional dependencies that is equivalent with the set of all functional dependencies holding in r. All known algorithms for this task have running times that can be in the worst case exponential in the size of the smallest cover of the dependency set. We consider approximate dependency inference. We define various measures for the error of a dependency in a relation. These error measures have the value 0 if the dependency holds and a value close to 1 if the dependency clearly does not hold. Depending on the measure used, all dependencies with error at least ϵ in r can be detected with high probability by considering only O(1ε) or O(|r|12/ε) random tuples of r. We also show how a machine learning algorithm due to Angluin et al. can be applied to give in output-polynomial time an approximately correct cover for the set of dependencies holding in r.}
}

@inbook{vaananen2007,
	address = {Cambridge},
	series = {London {Mathematical} {Society} student texts},
	title = {Dependence logic: a new approach to independence friendly logic},
	shorttitle = {Dependence logic},
	publisher = {Cambridge University Press},
	author = {Väänänen, Jouko},
	year = {2007},
	keywords = {111 Mathematics},
}

@Inbook{Thalheim,
    author = "Thalheim, Bernhard" ,
    title = "Dependencies in relational databases" ,
    publisher = "TEUBNER-TEXTE zur Mathematik",
    year = "1991",

}

@Inbook{Väänänen2017,
author="V{\"a}{\"a}n{\"a}nen, Jouko",
editor="Ba{\c{s}}kent, Can
and Moss, Lawrence S.
and Ramanujam, Ramaswamy",
title="The Logic of Approximate Dependence",
bookTitle="Rohit Parikh on Logic, Language and Society",
year="2017",
publisher="Springer International Publishing",
address="Cham",
pages="227--234",
doi="10.1007/978-3-319-47843-2_12",
}

@inproceedings{Casanova1983TowardsAS,
  title={Towards a sound view integration methodology},
  author={Marco A. Casanova and V{\^a}nia M. P. Vidal},
  booktitle={ACM SIGACT-SIGMOD-SIGART PODS},
pages = {36-47},
  year={1983},
}

@article{galliani2012,
	title = {Inclusion and exclusion dependencies in team semantics — {On} some logics of imperfect information},
	volume = {163},
	abstract = {We introduce some new logics of imperfect information by adding atomic formulas corresponding to inclusion and exclusion dependencies to the language of first order logic. The properties of these logics and their relationships with other logics of imperfect information are then studied. As a corollary of these results, we characterize the expressive power of independence logic, thus answering an open problem posed in Grädel and Väänänen, 2010 [9].},
	number = {1},

	journal = {Ann Pure Appl Log},
	author = {Galliani, Pietro},
	year = {2012},
	keywords = {Dependence, Imperfect information, Independence, Model theory, Team semantics},
	pages = {68--84},
doi = {10.1016/j.apal.2011.08.005},
}

\end{document}